\documentclass[a4paper,11pt]{article}
\pdfoutput=1
\usepackage{fullpage}
\usepackage{hyperref}
\hypersetup{
      colorlinks=false,
      linkcolor=blue,
      urlcolor=cyan,
}
\date{}
\usepackage{authblk}
\usepackage[noadjust]{cite}
\usepackage{amsfonts}




\usepackage{algpseudocode}
\usepackage{amsmath}
\usepackage{comment}
\usepackage{subcaption}
\usepackage[textwidth=20mm]{todonotes}
\usepackage{tikz}
\usepackage{algorithm}

\usetikzlibrary{calc,decorations,decorations.markings,decorations.pathmorphing,intersections,math,positioning,snakes}

\pgfdeclaredecoration{complete sines}{initial} 
{
    \state{initial}[
        width=+0pt,
        next state=sine,
        persistent precomputation={\pgfmathsetmacro\matchinglength{
            \pgfdecoratedinputsegmentlength / int(\pgfdecoratedinputsegmentlength/\pgfdecorationsegmentlength)}
            \setlength{\pgfdecorationsegmentlength}{\matchinglength pt}
        }] {}
    \state{sine}[width=\pgfdecorationsegmentlength]{
        \pgfpathsine{\pgfpoint{0.25\pgfdecorationsegmentlength}{0.5\pgfdecorationsegmentamplitude}}
        \pgfpathcosine{\pgfpoint{0.25\pgfdecorationsegmentlength}{-0.5\pgfdecorationsegmentamplitude}}
        \pgfpathsine{\pgfpoint{0.25\pgfdecorationsegmentlength}{-0.5\pgfdecorationsegmentamplitude}}
        \pgfpathcosine{\pgfpoint{0.25\pgfdecorationsegmentlength}{0.5\pgfdecorationsegmentamplitude}}
}
    \state{final}{}
}

\tikzset{
	vertex/.style={minimum size=#1,circle,fill=black,draw,inner sep=0pt},
	decoration={markings,mark=at position .5 with {\arrow[black,thick]{stealth}}},
	vertex/.default=2.5mm,
	bigVertex/.style={vertex=1mm},
	e0/.style={line width=0.8pt},
	e1/.style={line width=2.4pt},
	em/.style={line width=0.8pt,decoration={
            complete sines,
            segment length=8,
            amplitude=5
        },
        decorate},
    em0/.style={line width=0.8pt,decoration={
            complete sines,
            segment length=12,
            amplitude=3
        },
        decorate},
	transformsTo/.pic=
	{
		\coordinate (-leftEnd) at (0,0);
		\coordinate (-rightEnd) at (5,0);
		\draw[thick] (0,0) -- (0,1) -- (3,1) -- (3,2) -- (5,0) -- (3,-2) -- (3,-1) -- (0,-1) -- (0,0);
	},
	pics/red/.style n args={5}{code=
	{
		\tikzmath{\a=0.5;\b=0.6;}
		\node (v) at (0,\a)			[bigVertex,label=right:$v$]{};
		\node (w) at (0,-\a)		[bigVertex,label=right:$w$]{};
		\node (a) at (-\a,2*\b)		[bigVertex,label=above:$a$]{};
		\node (b) at (\a,2*\b)		[bigVertex,label=above:$b$]{};
		\node (c) at (-\a,-2*\b)	[bigVertex,label=below:$c$]{};
		\node (d) at (\a,-2*\b)		[bigVertex,label=below:$d$]{};
		
		\draw[#1] (v) -- (w);
		\draw[#2] (v) -- (a);
		\draw[#3] (v) -- (b);
		\draw[#4] (w) -- (c);
		\draw[#5] (w) -- (d);
	}},
	pics/red-straight/.style n args={2}{code=
	{
		\tikzmath{\a=0.5;\b=0.6;}
		\node (a) at (-\a,2*\b)		[bigVertex,label=above:$a$]{};
		\node (b) at (\a,2*\b)		[bigVertex,label=above:$b$]{};
		\node (c) at (-\a,-2*\b)	[bigVertex,label=below:$c$]{};
		\node (d) at (\a,-2*\b)		[bigVertex,label=below:$d$]{};
		
		\draw[#1] (a) -- (c);
		\draw[#2] (b) -- (d);
	}},
	pics/red-cross/.style n args={2}{code=
	{
		\tikzmath{\a=0.5;\b=0.6;}
		\node (a) at (-\a,2*\b)		[bigVertex,label=above:$a$]{};
		\node (b) at (\a,2*\b)		[bigVertex,label=above:$b$]{};
		\node (c) at (-\a,-2*\b)	[bigVertex,label=below:$c$]{};
		\node (d) at (\a,-2*\b)		[bigVertex,label=below:$d$]{};
		
		\draw[#1] (a) -- (d);
		\draw[#2] (b) -- (c);
	}},
	pics/red-ii/.style n args={5}{code=
	{
		\tikzmath{\a=0.5;}
		\node (a) at (0,3*\a)		[bigVertex,label=right:$a$]{};
		\node (v) at (0,\a)			[bigVertex,label=right:$v$]{};
		\node (w) at (0,-\a)		[bigVertex,label=right:$w$]{};
		\node (b) at (0,-3*\a)		[bigVertex,label=right:$b$]{};
		
		\draw[#1] (a) -- node [midway,left] {#5} (v);
		\draw[#2] (v) edge [bend right] (w);
		\draw[#3] (v) edge [bend left] (w);
		\draw[#4] (w) -- (b);
	}},
	pics/red-ii-single/.style n args={2}{code=
	{
		\tikzmath{\a=0.5;}
		\node (a) at (0,3*\a)		[bigVertex,label=right:$a$]{};
		\node (b) at (0,-3*\a)		[bigVertex,label=right:$b$]{};
		
		\draw[#1] (a) -- node [midway] {#2} (b);
	}},
	pics/swap/.style n args={4}{code=
	{
		\useasboundingbox (-1,-2.5) rectangle (1,2.6);
		\tikzmath{\a=1.5;}
		\begin{scope}[shift={(0,-0.5*\a)}]
		\foreach \i in {0,...,5}
			\node (v\i) at (\i*60:\a)		[bigVertex]{};
		\node (u0) at (-0.5*\a,2*\a)		[bigVertex]{};
		\node (u1) at (0.5*\a,2*\a)			[bigVertex]{};
		\node (z0) at (1.5*\a,1.4*\a) 		[bigVertex]{};
		\node (z1) at (1.5*\a,0.6*\a) 		[bigVertex]{};
		
		\draw[e1,#4] (v0) edge [bend right] (v1);
		\draw[#2,#4] (v1) -- node [midway,above,black] {$e'$} (v2);
		\draw[e1,#4] (v2) edge [bend right] (v3);
		\draw[e1,#3] (v3) edge [bend right] (v4);
		\draw[#1,#3] (v4) -- node [midway,below,black] {$e$} (v5);
		
		\draw[e1,#3] (v5) edge [bend right] (v0);
		\draw[e1,red] (v0) edge [bend right] (z1);
		\draw[e0] (z1) -- node [midway,right] {$f'$} (z0);
		\draw[e1,red] (z0) edge [bend right] (u1);
		\draw[e1,red] (u1) -- node [midway,above,black] {$f$} (u0);
		\draw[e1,red] (u0) edge [bend right] (v3);
		\end{scope}
	}},
	case3a/.pic=
	{
		\useasboundingbox (-1,-1.3) rectangle (1,1);
		\pic (r-) {red={e0}{e1}{e1}{e1}{e0}};
		\draw[e1] (r-a) edge [bend right] (r-c);
		\draw[e1] (r-a) edge [out=180,in=235,looseness=2.5] (r-d);
	},
	case3b-1/.pic=
	{
		\pic (r-) {red={e0}{e1}{e1}{e1}{e0}};
		\draw[e1] (r-a) edge [bend right] (r-c);
		\draw[e1] (r-b) edge [bend left] (r-d);
	},
	case3b-2/.pic=
	{
		\pic (r-) {red-cross={e1}{e0}};
		\draw[e1] (r-a) edge [bend right] (r-c);
		\draw[e1] (r-b) edge [bend left] (r-d);
	},
	case2a-1/.pic=
	{
		\useasboundingbox (-1,-2) rectangle (1,2);
		\pic (r-) {red={e0}{e1}{e0}{e1}{e0}};
		\draw[e1] (r-a) edge [bend left] node [vertex=0mm,pos=0.5,name=z1] {} (r-b);
		\draw[e1] (r-c) edge [bend right] node [vertex=0mm,pos=0.5,name=z2] {} (r-d);
		\draw[e1] (z1) edge[out=45,in=-45,looseness=3] (z2);
	},
	case2a-2/.pic=
	{
		\useasboundingbox (-1,-2) rectangle (1,2);
		\pic (r-) {red-straight={e0}{e0}};
		\draw[e1] (r-a) edge [bend left] node [vertex=0mm,pos=0.5,name=z1] {} (r-b);
		\draw[e1] (r-c) edge [bend right] node [vertex=0mm,pos=0.5,name=z2] {} (r-d);
		\draw[e1] (z1) edge[out=45,in=-45,looseness=3.5] (z2);
	},
	case2b-1/.pic=
	{
		\useasboundingbox (-1,-2) rectangle (1,2);
		\pic (r-) {red={e0}{e1}{e0}{e1}{e0}};
		\draw[e1] (r-a) edge [bend right] node [vertex=0mm,pos=0.5,name=z1] {} (r-c);
		\draw[e1] (r-b) edge [bend left] node [vertex=0mm,pos=0.5,name=z2] {} (r-d);
		\draw[e1] (z1) edge[out=135,in=45,looseness=5] (z2);
	},
	case2b-2/.pic=
	{
		\useasboundingbox (-1,-2) rectangle (1,2);
		\pic (r-) {red-cross={e0}{e0}};
		\draw[e1] (r-a) edge [bend right] node [vertex=0mm,pos=0.5,name=z1] {} (r-c);
		\draw[e1] (r-b) edge [bend left] node [vertex=0mm,pos=0.5,name=z2] {} (r-d);
		\draw[e1] (z1) edge[out=135,in=45,looseness=5] (z2);
	},
	case2c-1/.pic=
	{
		\useasboundingbox (-1,-2) rectangle (1,2);
		\pic (r-) {red={e0}{e1}{e0}{e1}{e0}};
		\draw[e1] (r-a) edge [out=-135,in=180] node [vertex=0mm,pos=0.4,name=z1] {} (r-d);
		\draw[e1] (r-b) edge [out=-45,in=0] node [vertex=0mm,pos=0.4,name=z2] {} (r-c);
		\draw[e1] (z1) -- (z2);
	},
	case2c-2/.pic=
	{
		\useasboundingbox (-1,-2) rectangle (1,2);
		\pic (r-) {red-straight={e0}{e0}};
		\draw[e1] (r-a) edge [out=-135,in=180] node [vertex=0mm,pos=0.4,name=z1] {} (r-d);
		\draw[e1] (r-b) edge [out=-45,in=0] node [vertex=0mm,pos=0.4,name=z2] {} (r-c);
		\draw[e1] (z1) -- (z2);
	},
	pics/invariant/.style n args={3}{code=
	{
		\useasboundingbox (-1,-1.75) rectangle (1,2.6);
		
		\tikzmath{\a=0.5;\b=0.6;}
		\pic (r-) {#1};
		
		\node (z1) at (-1.75*\a,0)		[bigVertex]{};
		\node (z2) at (1.75*\a,0)		[bigVertex]{};
		
		\draw[e1,red] (r-a) edge [out=-135,in=90] (z1);
		\draw[e1] (r-c) edge [out=135,in=90] (z1);
		
		\draw[e1,#2] (r-b) edge [out=-45,in=90] (z2);
		\draw[e1,#3] (r-d) edge [out=45,in=-90] (z2);
		
		\node (u1) at (2*\a,4*\b)	[bigVertex]{};
		\node (u2) at (-2*\a,4*\b) 	[bigVertex]{};
		\node (u3) at (-3*\a,2.5*\b) 	[bigVertex]{};
		
		\draw[e1,red] (z2) edge [bend right] (u1);
		\draw[e0] (u1) -- node [midway,above] {$f$} (u2);
		\draw[e1,red] (u2) -- node [midway,above left,black] {$e$} (u3);
		\draw[e1,red] (u3) edge [bend right] (z1);
	}}
}

\DeclareMathOperator{\cover}{cover}
\DeclareMathOperator{\cost}{cost}
\DeclareMathOperator{\link}{link}
\DeclareMathOperator{\cut}{cut}
\DeclareMathOperator{\update}{update}
\DeclareMathOperator{\evert}{evert}
\DeclareMathOperator{\vertex}{\bf vertex}
\DeclareMathOperator{\real}{\bf real}
\DeclareMathOperator{\remove}{remove}
\DeclareMathOperator{\add}{add}
\DeclareMathOperator{\roottree}{root}
\DeclareMathOperator{\connected}{connected}
\DeclareMathOperator{\mincost}{mincost}

\newcommand{\Oh}{\mathcal{O}}

\bibliographystyle{plainurl}

\title{Finding perfect matchings in bridgeless cubic multigraphs without dynamic (2-)connectivity}

\usepackage{amsthm}
\usepackage[T1]{fontenc}
\newtheorem{theorem}{Theorem}
\newtheorem{lemma}{Lemma}
\newtheorem{invariant0}{Invariant}

\author[ ]{Paweł Gawrychowski}
\author[ ]{Mateusz Wasylkiewicz\thanks{Partially supported by Polish National Science Center grant 2018/29/B/ST6/02633.}}
\affil[ ]{Institute of Computer Science, University of Wrocław, Poland\\ \texttt{\{gawry,mateusz.wasylkiewicz\}@cs.uni.wroc.pl}}

\begin{document}
\maketitle

\begin{abstract}
Petersen's theorem, one of the earliest results in graph theory, states that any bridgeless cubic multigraph contains a perfect matching.
While the original proof was neither constructive nor algorithmic, Biedl, Bose, Demaine, and Lubiw [J. Algorithms 38(1)] showed how
to implement a later constructive proof by Frink in $\Oh(n\log^{4}n)$ time using a fully dynamic 2-edge-connectivity structure. Then, Diks and
Stańczyk [SOFSEM 2010] described a faster approach that only needs a fully dynamic connectivity
structure and works in $\Oh(n\log^{2}n)$ time. Both algorithms, while reasonable simple, utilize non-trivial (2-edge-)connectivity structures. We show that this is not necessary,
and in fact a structure for maintaining a dynamic tree, e.g. link-cut trees, suffices to obtain a simple $\Oh(n\log n)$ time algorithm.
\end{abstract}

\section{Introduction}

Finding a maximum cardinality matching in a given graph is one of the fundamental algorithmic problems in graph theory. For bipartite graphs,
it can be seen as a special case of the more general problem of finding a maximum flow, which immediately implies a polynomial-time algorithm.
Already in the early 70s, Hopcroft and Karp~\cite{HopcroftKarp1973} obtained a fast $\Oh(m\sqrt{n})$ time algorithm for this problem, where $m$ denotes the number of
edges and $n$ the number of vertices. For general graphs, Edmonds~\cite{Edmonds1965} designed an algorithm working in $\Oh(mn^{2})$ time,
and in 1980 Micali and Vazirani~\cite{MicaliVazirani1980} stated an $\Oh(m\sqrt{n})$ time algorithm. For dense graphs, a better complexity
of $\Oh(n^{\omega})$, where $\omega$ is the exponent of $n\times n$ matrix multiplication, has been achieved by Mucha and Sankowski~\cite{MuchaSankowski2004}.
For the case of sparse graphs, i.e. $m=\Oh(n)$, 
a long and successful line of research based on applying continuous techniques resulted in an $m^{1+o(1)}$ time algorithm by Chen et al.~\cite{ChenEtAl2022}
for the bipartite case.
However, there was no further improvement for the general case, and the $\Oh(n^{1.5})$ time algorithm obtained by applying the approach
of Micali and Vazirani~\cite{MicaliVazirani1980} remains unchallenged.

This naturally sparked interest in searching for natural classes of sparse graphs that admit a faster algorithm. A natural candidate is a class
of graphs that always contain a perfect matching. One of the earliest results in graph theory attributed to Petersen~\cite{Petersen1891}, states that any bridgeless
cubic graph contains a perfect matching, where cubic means that the degree of every vertex is exactly 3, while bridgeless means that it is not possible to
remove a single edge to disconnect the graph. In fact, the theorem is still true for a cubic multigraph with at most two bridges, and from now on
we will consider multigraphs, i.e. allow loops and parallel edges. The original proof was very complicated and non-constructive, but 
Frink~\cite{Frink1926} provided another approach that can be easily implemented to obtain a perfect matching in $\Oh(n^2)$ time.
The high-level idea of this approach is to repeatedly apply one of the two possible reductions, in each step choosing the one that maintains
the invariant that the current multigraph is bridgeless and cubic. Then, we revert the reductions one-by-one, which possibly requires finding
an alternating cycle to make sure that a particular edge does not belong to the matching. 
Biedl, Bose, Demaine, and Lubiw~\cite{BiedlEtAl2001} improved the time complexity to $\Oh(n\log^4{n})$ thanks to two insights.
First, we can apply a fully dynamic 2-edge-connectivity structure of Holm, de Lichtenberg, and Thorup~\cite{HolmEtAl2001} to decide which reduction should be applied. Second,
finding an alternating cycle can be avoided by requiring that a chosen edge does not belong to the matching.
Diks and Stańczyk~\cite{DiksStanczyk2010} further improved the complexity to $\Oh(n\log^2{n})$ time by observing that in fact a fully dynamic connectivity
structure suffices if we additionally maintain a spanning tree of the current multigraph in a link-cut tree.
By plugging in the fully dynamic connectivity structure by Wulff-Nilsen~\cite{WulffNilsen2012}, the complexity of their algorithm
can be further decreased to $\Oh(n\log^{2}n/\log\log n)$. Alternatively, at the expense of allowing randomization and bit-tricks,
plugging in the structure of Huang, Huang, Kopelowitz, Pettie, and Thorup~\cite{HuangEtAl2023} results in expected $\Oh(n\log n (\log\log n)^2)$ running time.

\subparagraph*{Our contribution. } The algorithms of Biedl, Bose, Demaine, and Lubiw~\cite{BiedlEtAl2001} and Diks and Stańczyk~\cite{DiksStanczyk2010} can be seen as efficient implementations
of Frink's proof~\cite{Frink1926}. While both are reasonably simple (see Section~\ref{sec:outline}), they arguably hide some of their complexity in the
fully dynamic (2-edge-)connectivity structure. We show that in fact this can be avoided, and present an implementation that only needs a structure for
maintaining a dynamic tree, such as a link-cut tree. This results in a simple and self-contained algorithm that works in deterministic $\Oh(n\log n)$ time
without any bit-tricks.

\subparagraph*{Applications. }
Petersen's theorem can be generalized to cubic multigraphs with at most two bridges. Finding perfect matching in such multigraphs can be easily reduced
to finding perfect matching in bridgeless cubic multigraphs in linear time~\cite{BiedlEtAl2001}. Hence, our algorithm can be used for finding perfect matching
in cubic multigraphs with at most two bridges in $\Oh(n\log{n})$ time. We note that generalizing our algorithm to arbitrary cubic graphs can be difficult, as 
finding perfect matching in general graphs can be reduced in linear time to  finding perfect matching in cubic graphs~\cite{Biedl2001}.

The complement of a perfect matching of a cubic graph is a $2$-factor. A $2$-factor can be used to approximate graphic TSP problem, where we have to find
a shortest tour visiting all vertices of an undirected graph. Several approximation algorithms for the graphic TSP problem in cubic graphs were
presented~\cite{GamarnikEtAl2005,AggarwalEtAl2011,BoydEtAl2011,CorreaEtAl2015,Zuylen2015,CandrakovaLukotka2015,DvorakEtAl2016}. Recently,
Wigal, Yoo and Yu~\cite{WigalEtAl2021} presented a $5/4$-approximation algorithm for the graphic TSP problem in cubic graphs which works in $\Oh(n^2)$ time.
We can obtain a faster algorithm for this problem in bridgeless cubic graphs by patching the cycles of a $2$-factor of the input graph into a single tour. However,
its approximation ratio is $5/3$, since every cycle of the computed $2$-factor has length at least three. We can improve the approximation ratio to $3/2$ by
computing a $2$-factor with no cycles of length three. Finding such a 2-factor can be reduced in $\Oh(n)$ time to finding ordinary $2$-factor in bridgeless
cubic multigraphs by first repeatedly contracting the cycles of length three of the input graph. 
Thus, we obtain an $3/2$-approximation algorithm working in $\Oh(n\log{n})$ time.

Another application of finding perfect matching in bridgeless cubic graphs is finding $P_4$-decomposition in bridgeless cubic graphs, which consists in
partitioning the set of edges of the input graph into a collection of paths of length exactly three. Kotzig~\cite{Kotzig1957} presented a simple construction
that, given a perfect matching $M$ of a cubic graph, finds its $P_4$-decomposition in linear time by directing every cycle of the complement of $M$.
Therefore, using our algorithm, such a decomposition can be found in $\Oh(n\log{n})$ time in bridgeless cubic graphs.

\section{Preliminaries}
Let $G=(V,E)$ be an undirected and connected multigraph, i.e. it may contain loops and parallel edges. We denote the number of vertices of $G$ by $n$ and the number of edges by $m$.
We say that $G$ is \emph{cubic} if the degree of every vertex of $G$ is equal to three. 
We denote an edge connecting vertices $v$ and $w$ by $\{v,w\}$. If there is exactly one copy of an edge connecting $v$ and $w$, we refer to such an edge as \emph{single}.
Moreover, we call a pair of two different edges connecting the same pair
of vertices a \emph{double edge}. Given an edge $e$ of $G$, we denote by $G\setminus e$ the multigraph obtained from $G$ by removing exactly one copy of $e$. Given a
vertex $v$ of $G$, we denote by $G\setminus v$ the multigraph obtained from $G$ by removing vertex $v$ together with all its incident edges. Given an edge $e=\{v,w\}$, for
some $v,w\in V$, we denote by $G\cup e$ the multigraph obtained from $G$ by adding one copy of $e$.
An edge $e$ of $G$ is called a \emph{bridge} if its removal disconnects $G$. We say that $G$ is \emph{bridgeless} if no edge of $G$ is a bridge.

A multigraph $H=(V',E')$ is said to be a \emph{subgraph} of $G$ if $V'\subseteq V$ and $E'\subseteq E$. A subgraph $H$ of $G$ is said to be \emph{spanning} if $V=V'$. A spanning subgraph $T$ of $G$ is said to be a \emph{spanning tree} of $G$ if $T$ is a tree. We
denote the set of all vertices of a subgraph $H$ of $G$ by $V(H)$ and the multiset of all edges of $H$ by $E(H)$. A \emph{path} is a finite sequence of vertices and edges
$P=(v_0,e_1,v_1,e_2,v_2,\ldots,v_{\ell-1},e_\ell,v_\ell)$ of $G$, for some nonnegative integer $\ell$, where the vertices $v_0,v_1,\ldots,v_\ell\in V$ are pairwise distinct,
$e_1,e_2\ldots,e_\ell\in E$, and $e_i=\{v_{i-1},v_i\}$ for every $i\in\{1,\ldots,\ell\}$. We refer to $\ell$ as the \emph{length} of $P$. We say that $P$ \emph{connects} vertices $v_0$ and $v_\ell$. A \emph{cycle} is defined
similarly, except that $v_0$ and $v_\ell$ should be equal. We often identify a path or a cycle in $G$ with the subgraph of $G$ consisting of all its vertices and edges. Given
a spanning tree $T$ of $G$, we say that $e\in E(G)\setminus E(T)$ \emph{covers} $f\in E(T)$ (in $T$) if  $f\in E(P)$, where $P$ is the path in $T$ connecting
the endpoints of $e$.

A subset $M\subseteq E$ is said to be a \emph{matching} of $G$ if the degree of every vertex in the subgraph of $H=(V,M)$ 
is at most one. A matching $M$ of $G$ is said to be \emph{perfect} if the degree of every vertex in the subgraph $H=(V,M)$ is equal to one.
The \emph{perfect matching problem} consists in finding a perfect matching of a given multigraph, if it exists.
Given an edge $e$, we say that it is \emph{matched} (with respect to $M$) if it belongs to $M$. Otherwise, we say that it is \emph{unmatched}
(with respect to $M$).
An \emph{($M$-)alternating cycle}
is a cycle of $G$ whose edges alternately belong and do not belong to $M$.
An \emph{application} of an $M$-alternating cycle $A$ to $M$ is an operation that removes all matched edges of $A$ from $M$ and adds all
unmatched edges of $A$ to $M$.

\section{Link-cut trees}\label{sec:link-cut}

We need a structure for maintaining a forest of vertex-disjoint rooted trees, each of whose edges has a real-valued cost. Link-cut trees of Sleator and Tarjan~\cite{SleatorTarjan1981} support the following
operations (among others) in $\Oh(\log n)$ time each, where $n$ is the total number of vertices:
\begin{description}
\item $\roottree(\vertex v)$: return the root of the tree containing $v$.
\item $\cost(\vertex v)$: returns the cost of the edge from $v$ to its parent. We assume that $v$ is not a root.
\item $\mincost(\vertex v)$: returns the vertex $w$ closest to $\roottree(v)$ such that the edge from $w$ to its parent has minimum cost on the path connecting $v$ and $\roottree(v)$. We assume that $v$ is not a root.
\item $\update(\vertex v, \real x)$: add $x$ to the cost of every edge on the path connecting $v$ and $\roottree(v)$.
\item $\link(\vertex u, v, \real x)$: combine the trees containing $u$ and $v$ by adding an edge $(u,v)$ with cost $x$, making $v$ the parent of $u$. We assume that
$u$ and $v$ are in different trees, and $u$ is a root.
\item $\cut(\vertex v)$: delete the edge from $v$ to its parent. We assume that $v$ is not a root.
\item $\evert(\vertex v)$: modify the tree by making $v$ the root.
\end{description}
As mentioned in the original paper, instead of real-valued costs we can in fact work with an arbitrary (but fixed) semigroup. In particular, we can use the semigroup
$G=(E,\oplus)$, where $x\oplus y = x$ for any $x, y$. This allows us to maintain a forest of vertex-disjoint unrooted trees, each of whose edges $e$
has its associated label $\cover(e)$, under the following operations:
\begin{description}
\item $\connected(u,v)$: check if $u$ and $v$ belong to the same tree.
\item $\remove(u,v)$: remove an edge $\{u,v\}$ from the forest. We assume that the edge belongs to some tree.
\item $\add(u,v,x)$: add an edge $\{u,v\}$ to the forest, and set its label to be $x$. We assume that $u$ and $v$ are in different trees.
\item $\cover(u,v)$: return the label of the edge $\{u,v\}$. We assume that the edge belongs to some tree.
\item $\update(u,v,x)$: set the label of every edge on the path connecting $u$ and $v$ to be $x$. We assume that $u$ and $v$ belong to the same tree.
\end{description}
It is straightforward to implement these operations in $\Oh(\log n)$ time each by maintaining a link-cut tree, except that we use the semigroup $G$ instead of real-valued costs.
\begin{itemize}
\item $\connected(u,v)$ checks if $\roottree(u)=\roottree(v)$.
\item $\remove(u,v)$ first calls $\evert(v)$, and then $\cut(u)$.
\item $\add(u,v,x)$ proceeds by calling $\evert(u)$, and then $\link(u,v,x)$.
\item $\cover(u,v)$ first calls $\evert(v)$, and then returns $\cost(u)$.
\item $\update(u,v,x)$ is implemented by calling $\evert(v)$, and then $\update(u,x)$.
\end{itemize}

By maintaining another link-cut tree with real-valued costs we can also support checking if the paths
connecting $u$ with $v$ and $u'$ with $v'$ share a common edge in $\Oh(\log n)$ time (assuming that $u, v, u', v'$ all belong to the same tree).
The cost of each edge is initially $0$. To implement a query, we first call $\evert(u)$ and $\update(v,-1)$. This has the effect
of setting the cost of every edge on the path connecting $u$ with $v$ to $-1$. Then, we call $\evert(u')$ and
check if $\mincost(v')$ returns $-1$, which happens if and only if the path connecting $u'$ and $v'$ shares a common edge with the path
connecting $u$ and $v$. Finally, we call $\evert(u)$ again, and then $\update(v,1)$ to restore the costs.

We note that any other structure for maintaining dynamic trees, e.g. top trees, could be used here in place of link-cut trees.

\section{Outline of previous algorithms}\label{sec:outline}
In this section we present previous algorithms for the perfect matching problem in bridgeless cubic multigraphs.

\subsection{$\Oh(n^2)$ time algorithm based on Frink's proof}\label{subsec:frink}
Frink's proof of Petersen's theorem can be easily turned into an algorithm. It uses the following theorem:
\begin{theorem}[Frink]\label{thm:frink}
Let $G$ be any bridgeless cubic multigraph and $\{v,w\}$ any single edge of $G$. Let $\{a,v\}$ and $\{b,v\}$ be other edges of $G$ incident to $v$. Let $\{c,w\}$ and $\{d,w\}$ be other edges of $G$ incident to $w$. Define multigraphs $H_1=((G\setminus v)\setminus w)\cup\{a,c\}\cup\{b,d\}$ and $H_2=((G\setminus v)\setminus w)\cup\{a,d\}\cup\{b,c\}$ (see Figure~\ref{fig:red}).
Then both $H_1$ and $H_2$ are cubic and at least one of them is bridgeless.
\end{theorem}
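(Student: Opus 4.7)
This follows from a local degree count: removing $v$ and $w$ strips exactly one edge-incidence from each of $a, b, c, d$ (two in degenerate cases such as $a = b$), and the two new edges of either $H_1$ or $H_2$ reinsert these with matching multiplicity. Every vertex keeps degree three.

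\textbf{Bridgeless.} I would argue by contradiction: suppose both $H_1$ and $H_2$ have bridges $f_1, f_2$, inducing bipartitions $V(H_i) = X_i \sqcup Y_i$ with $|\partial_{H_i}(X_i)| = 1$. Extending $X_i$ to a bipartition $V(G) = X_G \sqcup Y_G$ by placing $v, w$ on either side, the cut size in $G$ equals $|\partial_{H_i}(X_i)|$ plus a correction that counts how the five edges incident to $v, w$ and the two new edges of $H_i$ cross. A short case analysis on the placement of $a, b, c, d$ in $X_i, Y_i$, minimising over the placement of $v, w$, shows that in almost every configuration some extension yields $|\partial_G(X_G)| = 1$, already a bridge in $G$, contradicting bridgelessness. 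The only configurations that survive are: $f_i \in E(G)$, the neighbours split as $\{a,c\}$ against $\{b,d\}$ across $X_1$, and symmetrically as $\{a,d\}$ against $\{b,c\}$ across $X_2$. In particular $a, b, c, d$ must be pairwise distinct.

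To eliminate this ``dangerous'' configuration I would overlay the two bipartitions into a four-part partition $V(G) = P_1 \sqcup P_2 \sqcup P_3 \sqcup P_4$ with $a \in P_1$, $c \in P_2$, $d \in P_3$, and $b, v, w \in P_4$. The constraints $|\partial_G(X_1)| = |\partial_G(X_2)| = 3$ (the minimum cut size in this configuration, when $v, w$ are placed opposite to $X_i$) translate into a small Diophantine system in the inter-part edge counts $e_{ij}$, with baseline contributions $\{a,v\} \in e_{14}$, $\{c,w\} \in e_{24}$, $\{d,w\} \in e_{34}$. Enumerating solutions, all but one subcase immediately yield $|\partial_G(P_i)| = 1$ for some $i$, a bridge in $G$. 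The remaining subcase (with $f_1$ crossing between $P_1, P_3$, $f_2$ crossing between $P_1, P_2$, and every $e_{ij}$ at its minimum value) has $|\partial_G(P_i)| \geq 2$ for every $i$, so coarse bookkeeping misses the bridge. This is the main obstacle. I would resolve it by inspecting $P_4$ locally: since $\partial_G(P_4) = \{\{a,v\},\{c,w\},\{d,w\}\}$ is entirely incident to $v$ or $w$, no edge leaves $P_4 \setminus \{v,w\}$ for $V(G) \setminus P_4$; and within $P_4$, $w$'s edges do not reach $P_4 \setminus \{v,w\}$ (its other neighbours $c, d$ lie outside $P_4$), while $v$'s only edge into $P_4 \setminus \{v,w\}$ is $\{v, b\}$ (since $\{v, w\}$ is single, $v$'s remaining $P_4$-neighbours are exactly $w$ and $b$). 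Hence $|\partial_G(P_4 \setminus \{v, w\})| = 1$, a bridge in $G$, the final contradiction.
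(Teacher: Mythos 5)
Your proof is correct, but note that the paper never proves Theorem~\ref{thm:frink} directly: it explicitly defers it, and the implicit proof is constructive and woven into the algorithm's analysis. There, each $G_i$ carries a spanning tree $T_i$ in which every tree edge stores a covering non-tree edge (Invariant~\ref{inv:cover}), which certifies that $G_i$ has no bridge, and the swap mechanism plus the case analysis of reductions of type I (Lemmas~\ref{lem:swap} and~\ref{lem:invariant-red-i}) shows that, depending on how $a,b,c,d$ are connected in $(T_i\setminus v)\setminus w$, either the straight or the crossing reduction allows the certificate to be repaired, which simultaneously proves the theorem and tells the algorithm which reduction to apply via the link-cut tree in $\Oh(\log n)$ time. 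Your route is the classical direct one and genuinely different: assume both $H_1$ and $H_2$ have bridges $f_1,f_2$, show the only bridge cuts that do not lift to a cut of size one in $G$ are $f_i\in E(G)$ with the neighbours split $\{a,c\}$ versus $\{b,d\}$ for $H_1$ and $\{a,d\}$ versus $\{b,c\}$ for $H_2$ (forcing $a,b,c,d$ pairwise distinct), overlay the two cuts into $P_1,\dots,P_4$, and use the two equations $e_{13}+e_{14}+e_{23}+e_{24}=3$ and $e_{12}+e_{14}+e_{23}+e_{34}=3$ with $e_{14},e_{24},e_{34}\ge 1$; I checked the enumeration, and indeed all consistent placements of $f_1,f_2$ except the one you single out force some $|\partial_G(P_i)|=1$, while in the surviving case $\partial_G(P_4)=\{\{a,v\},\{c,w\},\{d,w\}\}$ and your local inspection correctly exhibits $\{b,v\}$ as the unique edge leaving $P_4\setminus\{v,w\}$, a bridge of $G$. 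What each approach buys: yours is a self-contained, non-algorithmic existence proof, independent of any data structure (and closer to Frink's original spirit); the paper's buys an effective criterion, since the same analysis that establishes the theorem is what lets the algorithm decide between the two reductions efficiently. One small presentational point: the placement $a,b\in X_1$, $c,d\in Y_1$ (and its analogue for $H_2$) is ruled out not because an extension gives $|\partial_G(X_G)|=1$ but because both new edges would cross the supposed bridge cut of $H_1$, contradicting that exactly one edge crosses; your blanket phrasing does not literally cover that case, though the list of surviving configurations you state is the right one.
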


\noindent We call the operation of producing $H_1$ (resp. $H_2$) from $G$ a \emph{straight} (resp. \emph{crossing}) \emph{reduction (of type I)} on $\{v,w\}$. We refer to
both straight and crossing reductions as \emph{reductions (of type I)}. We do not provide the proof of the above theorem, but stress that it will follow from the analysis
of our algorithm, making the result self-contained.

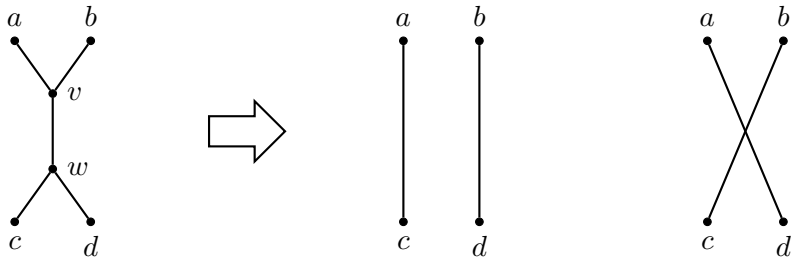
\begin{figure}[htpb]
\centering
\begin{tikzpicture}[transform shape]
	\pic (arrow) [scale=0.2]{transformsTo};
	\pic [left=2cm of arrow-leftEnd] {red={e0}{e0}{e0}{e0}{e0}};
	\pic [right=2cm of arrow-rightEnd] {red-straight={e0}{e0}};
	\pic [right=6cm of arrow-rightEnd] {red-cross={e0}{e0}};
\end{tikzpicture}
\caption{Straight and crossing reduction of type I on single edge $\{v,w\}$.\label{fig:red}}
\end{figure}

The idea of the algorithm based on Theorem~\ref{thm:frink} is to repeatedly perform the reduction on any single edge of the input multigraph $G_0$ to produce
a sequence of multigraphs $G_0$, $G_1$, \ldots, $G_k$ which are all cubic and bridgeless. It is easy to observe that every bridgeless cubic multigraph with more
than two vertices has a single edge. Hence, we can assume that $k=n/2-1$ and $|V(G_k)|=2$. To build a perfect matching of the input multigraph $G_0$, we can find any perfect
matching of $G_k$ and revert the reductions in a reverse order to find perfect matchings of $G_{k-1}$, $G_{k-2}$, \ldots, $G_0$. This can be done repeatedly
using the following theorem.

\begin{lemma}[\cite{BiedlEtAl2001}]\label{thm:reduction}
Let $G$ be any bridgeless cubic multigraph and $G'$ be a multigraph obtained by performing a reduction on a single edge of $G$. Given a perfect matching $M'$ of $G'$,
we can find a perfect matching of $G$ in $\Oh(n)$ time.
\end{lemma}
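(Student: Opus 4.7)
The plan is to perform a case analysis on $|M'\cap\{e_1,e_2\}|$, where $e_1,e_2$ are the two edges added by the reduction; take $e_1=\{a,c\}$ and $e_2=\{b,d\}$ in the straight case, the crossing case being symmetric. If neither lies in $M'$, I would output $M:=M'\cup\{\{v,w\}\}$. If exactly one, say $e_1$, lies in $M'$, I would output $M:=(M'\setminus\{e_1\})\cup\{\{v,a\},\{w,c\}\}$; the other single-edge case is analogous. Each of these outputs is verified in $\Oh(1)$ time to be a perfect matching of $G$: the deleted and inserted edges correctly match the six vertices $\{v,w,a,b,c,d\}$, while every other vertex is matched as in $M'$.

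The only nontrivial case is when both $e_1,e_2\in M'$. My plan here is to first find an $M'$-alternating cycle $Z$ in $G'$ that passes through $e_1$, and to replace $M'$ by $M'\triangle E(Z)$; this is again a perfect matching of $G'$ but no longer contains $e_1$, reducing the problem to one of the cases already handled. The two remaining obligations are to justify existence of such a $Z$ and to construct it in $\Oh(n)$ time.

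For existence I would exploit the $2$-factor structure of $G'\setminus M'$, which is a vertex-disjoint union of cycles $C_1,\ldots,C_t$ because $G'$ is cubic. Consider the auxiliary multigraph $H$ whose vertices are $C_1,\ldots,C_t$ and whose edges are the matched edges of $M'$, each joining the cycles containing its two endpoints (as a loop when these coincide). The bridgeless hypothesis on $G'$ lifts to $H$: a non-loop bridge in $H$ would partition the $2$-factor cycles into two sides whose unions in $V(G')$ are joined by only that edge, contradicting bridgelessness. Hence $e_1$ lies on a circuit in $H$, and by threading appropriate arcs along the visited $2$-factor cycles (and, in the degenerate case of a parallel unmatched edge to $e_1$, directly taking an alternating $2$-cycle) one obtains an $M'$-alternating cycle of $G'$ through $e_1$.

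To construct $Z$ in $\Oh(n)$ time I would simulate this search directly in $G'$ by a depth-first traversal starting from an endpoint of $e_1$ along $e_1$ and then alternating edge types: after a matched edge the walker tries the at most two unmatched continuations, and after an unmatched edge the next matched edge is uniquely determined by $M'$. As soon as the traversal revisits a previously seen vertex the induced sub-walk contains a simple alternating cycle through $e_1$, which the existence argument guarantees will happen; standard bookkeeping bounds the total work by $\Oh(n)$ because each edge is explored a constant number of times. The main obstacle in executing the plan carefully is that a circuit in $H$ does not quite mechanically lift to an alternating cycle in $G'$ (the endpoints of consecutive matched edges on a shared $2$-factor cycle may not be adjacent on it, forcing the walk to use additional matched edges and revisit cycles of $H$), together with the usual multigraph degeneracies of loops and parallel edges; all of these issues can nevertheless be absorbed into the depth-first search without affecting the linear running time.
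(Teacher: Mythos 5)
Your three-way case analysis is exactly the paper's proof of Lemma~\ref{thm:reduction}: cases a) and b) are handled identically, and in the case where both new edges lie in $M'$ the paper, like you, flips an $M'$-alternating cycle through one of them to fall back to the easy cases. The difference is how that alternating cycle is obtained: the paper simply invokes Lemma~\ref{thm:alternating-cycle} (taken from Biedl, Bose, Demaine and Lubiw) as a black box for both existence and the $\Oh(n)$ construction, whereas you attempt to prove it from scratch, and it is precisely there that your argument has a genuine gap.

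Concretely, your existence argument via the contracted multigraph $H$ (2-factor cycles as vertices, matched edges as edges) establishes that $e_1$ lies on a circuit of $H$, but such a circuit does not yield an $M'$-alternating cycle of $G'$: between two consecutive matched edges whose endpoints lie on the same 2-factor cycle $C$, the connecting path inside $C$ consists only of unmatched edges, so alternation forces it to have length exactly one, which need not hold. You acknowledge this, but ``absorbing the issue into the DFS'' is not a proof --- at that point existence itself is no longer established, and your termination claim for the search (``the existence argument guarantees'' a cycle through $e_1$ will be found) becomes circular. The linear-time construction is also unjustified as stated: when an alternating walk revisits a vertex, the closed subwalk need not contain $e_1$ and need not be alternating at the meeting vertex (the two edges closing the cycle can both be unmatched), and searching for an alternating path with vertices marked ``visited'' runs into the usual parity problems of alternating reachability, so ``each edge is explored a constant number of times'' does not by itself give correctness. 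These difficulties are exactly the content of the cited lemma; either invoke it as the paper does (the statement you are proving is itself attributed to the same source), or supply a genuine proof of the alternating-cycle step --- for instance via the fact that every edge of a bridgeless cubic multigraph is avoided by some perfect matching, together with a careful linear-time search --- rather than the sketch given.
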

\begin{proof}
Without loss of generality, consider a straight reduction. We use the notation from the statement of Theorem~\ref{thm:frink}. We construct a perfect matching $M$ of $G$.
We start off with the empty set. We add every edge of $M'$ which belongs to $G$ to $M$. Hence, it remains to add $\{v,w\}$ or some edges incident to $\{v,w\}$ to $M$. We consider
the following three cases (see Figure~\ref{fig:reverting-reductions}).
\begin{enumerate}
\item[a)] If both $\{a,c\}$ and $\{b,d\}$ do not belong to $M'$, we add $\{v,w\}$ to $M$.
\item[b)] If either $\{a,c\}$ or $\{b,d\}$ belongs to $M'$, say $\{a,c\}$, we add $\{a,v\}$ and $\{c,w\}$ to $M$.
\item[c)] If both $\{a,c\}$ and $\{b,d\}$ belong to $M'$, we find and apply an $M'$-alternating cycle of $G'$ which contains $\{b,d\}$ to $M'$ using
Lemma~\ref{thm:alternating-cycle} below. Then we get either case~a) or~b).\qedhere
\end{enumerate}
\end{proof}

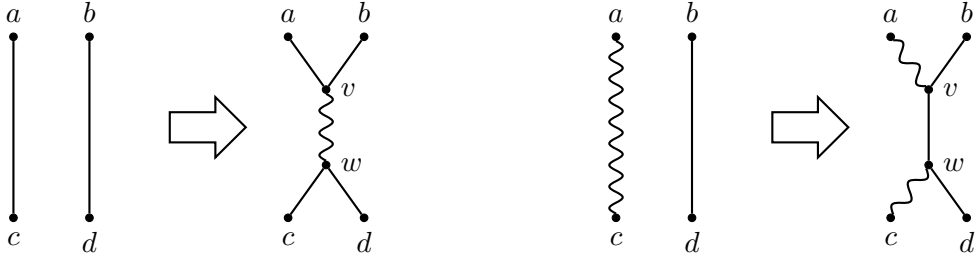
\begin{figure}[htpb]
\centering
\begin{subfigure}{.49\textwidth}
	\centering
	\begin{tikzpicture}[transform shape]
		\pic (arrow) [scale=0.2]{transformsTo};
		\pic [left=1.5cm of arrow-leftEnd] {red-straight={e0}{e0}};
		\pic [right=1cm of arrow-rightEnd] {red={em}{e0}{e0}{e0}{e0}};
	\end{tikzpicture}
\end{subfigure}
\begin{subfigure}{.49\textwidth}
	\centering
	\begin{tikzpicture}[transform shape]
		\pic (arrow) [scale=0.2]{transformsTo};
		\pic [left=1.5cm of arrow-leftEnd] {red-straight={em}{e0}};
		\pic [right=1cm of arrow-rightEnd] {red={e0}{em}{e0}{em}{e0}};
	\end{tikzpicture}
\end{subfigure}
\caption{Reverting reduction of type I. The matched edges are marked by wavy lines.\label{fig:reverting-reductions}}
\end{figure}

\begin{lemma}[\cite{BiedlEtAl2001}]\label{thm:alternating-cycle}
Let $G$ be a bridgeless cubic multigraph, $M$ a perfect matching of $G$, and $e$ an edge of $G$. Then $G$ has an $M$-alternating cycle that contains $e$
that can be found in $\Oh(n)$ time.
\end{lemma}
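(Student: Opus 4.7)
The plan is to reduce the lemma to a matching-existence claim and then apply a standard matching search. An $M$-alternating cycle through $e$ exists if and only if there is a perfect matching $M^*$ of $G$ with $e \in M \triangle M^*$, since $M \triangle M^*$ is a disjoint union of $M$-alternating cycles. Thus, if $e = \{u,v\} \in M$, I seek a perfect matching of $G \setminus e$ (then $M^*$ avoids $e$); if $e = \{u,v\} \notin M$, I seek a perfect matching of $G \setminus \{u\} \setminus \{v\}$, which combined with $e$ yields $M^*$.

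To establish existence, I would invoke Tutte's theorem and derive a contradiction assuming no such PM exists. In Case 1 ($e \in M$), a violating set $S$ would satisfy $o((G \setminus e) - S) > |S|$, where $o$ denotes the number of odd components. Combined with $o(G - S) \leq |S|$ (guaranteed by the existence of $M$) and parity ($o \equiv |S| \pmod 2$ since $n$ is even), this forces $o(G - S) = |S|$ with $e$ being a bridge in $G - S$ that splits an even component into two odd halves $A$ and $B$. Bridgelessness and cubicity of $G$ then imply: each odd component of $G - S$ has boundary to $S$ of odd size (parity: $|\partial_G C| \equiv |C| \pmod 2$) and size $\geq 2$ (bridgelessness), hence $\geq 3$; each of $A$ and $B$ has boundary $\{e\} \cup (\text{edges to } S)$ of odd size $\geq 2$, forcing an even number $\geq 2$ of edges to $S$ from each. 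Summing yields $|\partial_G S| \geq 2 + 2 + 3|S| > 3|S|$, contradicting the degree-sum bound $|\partial_G S| \leq \sum_{w \in S} \deg_G(w) = 3|S|$. In Case 2 ($e \notin M$), the analogous argument with $S' = S \cup \{u,v\}$ gives $o(G - S') = |S'|$ and $|\partial_G S'| \geq 3|S'|$ from the odd components alone; however $S'$ contains both endpoints of $e$, so $e$ is internal to $S'$ and $|\partial_G S'| \leq 3|S'| - 2$, yielding another contradiction.

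Once existence of the augmenting path is guaranteed --- from $u$ to $v$ with respect to $M' = M \setminus \{e\}$ in Case 1, or between $m(v)$ and $m(u)$ with respect to an analogous near-perfect matching in Case 2 --- it can be found by alternating BFS with Edmonds-style blossom contractions in $\Oh(n)$ time on the cubic graph (which has $\Oh(n)$ edges), and combined with $e$ (and the matched edges at $u, v$ in Case 2) to recover the desired $M$-alternating cycle. The main obstacle is the boundary-counting argument in Case 1, where the interplay between the two odd halves $A, B$ of the split component and the other odd components of $G - S$ requires careful bookkeeping: the key numerical input is that bridgelessness plus odd parity upgrades a lower bound of $2$ on a boundary to $3$, which is exactly what makes the summed lower bound exceed the degree-sum upper bound.
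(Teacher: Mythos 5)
The paper never proves this lemma itself: it is imported verbatim from \cite{BiedlEtAl2001} and used as a black box, so your proposal has to stand on its own. Its graph-theoretic half does stand: reducing the lemma to ``some perfect matching $M^*$ disagrees with $M$ on $e$'' and taking the component of $e$ in $M\triangle M^*$ is sound, and both Tutte-style counting arguments are correct --- each odd component has odd boundary, bridgelessness rules out boundary $1$, so each contributes at least $3$ edges to $S$; in Case~1 the equality analysis correctly forces $e$ to split an even component into two odd halves contributing $2+2$ further edges, beating the degree bound $3|S|$; in Case~2 the edge $e$ lying inside $S'=S\cup\{u,v\}$ costs the needed $2$ against $3|S'|$. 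This is essentially the classical Schönberger/Plesn\'{\i}k argument, and the translation into an augmenting path for $M\setminus\{e\}$ in $G\setminus e$ (resp.\ for $M$ restricted to $G-u-v$, between $m(u)$ and $m(v)$) and back into an $M$-alternating cycle through $e$ is also correct.

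The gap is in the algorithmic half, which is the part of the statement that goes beyond existence. You discharge it with ``alternating BFS with Edmonds-style blossom contractions in $\Oh(n)$ time,'' but a single augmenting-path search with blossom shrinking is not an off-the-shelf linear-time routine: the standard implementation maintains blossoms with a union--find structure and runs in $\Oh(m\,\alpha(m,n))$ time, and removing the $\alpha$ factor requires genuinely nontrivial machinery (e.g.\ a full Micali--Vazirani phase or specialized disjoint-set tricks), none of which you describe or verify applies here. Since the lemma explicitly claims $\Oh(n)$ --- and that bound is what makes the $\Oh(n^2)$ baseline of Subsection~\ref{subsec:frink} work as stated --- asserting it for general blossom search is the missing step; blossoms genuinely arise (odd cycles of the $2$-factor), so they cannot simply be waved away. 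To close the gap you would need either a careful argument that one search can be implemented in truly linear time, or, preferably, a direct construction of the alternating cycle that exploits cubicity and bridgelessness (each vertex has exactly one matched and two unmatched edges, so the alternating structures grown from the endpoints of $e$ are highly constrained), which is the kind of lightweight argument the cited $\Oh(n)$ bound is meant to rest on. A small additional point: in Case~1 the search must be run in $G\setminus e$, not in $G$, or the single edge $e$ itself could be returned as the ``augmenting path.''
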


Using these lemmas we can find a perfect matching in bridgeless cubic multigraphs in $\Oh(n^2)$ time, since we can check if a multigraph is bridgeless and
find an alternating cycle in linear time.

\subsection{$\Oh(n\log^4{n})$ time algorithm with fully dynamic 2-edge-connectivity}\label{subsec:biedl}

Notice that there are two bottlenecks in the algorithm presented in Subsection~\ref{subsec:frink}: checking if a multigraph is bridgeless,
and finding an alternating cycle. To check if a multigraph is bridgeless, we can use a fully dynamic $2$-edge-connectivity structure.
Such a structure maintains a multigraph $G$ under the following operations:
\begin{itemize}
\item add an edge to $G$,
\item remove an edge from $G$,
\item check if given two vertices of $G$ are in the same $2$-edge-connected component of $G$.
\end{itemize}

To remove the first bottleneck, Biedl, Bose, Demaine, and Lubiw~\cite{BiedlEtAl2001} used the fully dynamic $2$-edge-connectivity structure given by Holm, de
Lichtenberg and Thorup~\cite{HolmEtAl2001} with $\Oh(\log^4{n})$ amortized time per operation, thus obtaining an $\Oh(n\log^4{n})$ time algorithm.
We note that plugging in a faster (and later) structure of Holm, Rotenberg and Thorup~\cite{HolmEtAl2018} improves the time complexity to $\Oh(n(\log{n})^2(\log{\log{n}})^2)$.

In order to remove the second bottleneck, the idea of Biedl, Bose, Demaine, and Lubiw~\cite{BiedlEtAl2001} was to forbid the case where both edges of some $E(G_i)\setminus E(G_{i-1})$ belong to the found perfect matching.
To this end, we choose any edge $e_0$ of the input multigraph $G_0$, and search for a perfect matching which does not contain $e_0$. Notice that
Lemma~\ref{thm:alternating-cycle} implies that such perfect matching always exists. Then, we perform a reduction on a single edge incident to $e_0$,
and we define $e_1$ as an edge of $E(G_1)\setminus E(G_0)$ such that $e_0\cap e_1\neq\emptyset$, that is, $e_{0}$ and $e_{1}$ are incident to the same vertex.
Next, we recursively find a perfect matching in $G_1$
which does not contain $e_1$. Again, we perform a reduction on a single edge incident to $e_1$, and so on. Recall that $G_k$ consists of exactly two vertices,
so it is trivial to find a perfect matching which does not contain $e_k$. Again, we revert all reductions to construct a perfect matching of the input multigraph $G_0$.
However, now we can use the assumption that the perfect matching of $G_i$ does not contain $e_i\in E(G_i)\setminus E(G_{i-1})$. Therefore, we can construct a
perfect matching of $G_{i-1}$ in constant time since we do not have to apply an alternating cycle to get rid of the case c) from the proof of Lemma~\ref{thm:reduction}.
This optimization gives us the desired $\Oh(n\log^4{n})$ running time.

Note that we cannot perform the reduction of type I if all edges of $G_i$ incident to $e_i$ are double edges. In such a case we perform a \emph{reduction of type II}
on any edge incident to $e_i$ instead of reduction of type I as follows. Consider a double edge $e=\{\{v,w\},\{v,w\}\}$ of $G_i$. Let $\{a,v\}$ be a single edge incident to
$v$ and $\{b,w\}$ a single edge incident to $w$. The reduction removes both copies of $\{v,w\}$ and all their incident edges, adds an edge $\{a,b\}$ to the multigraph
and defines $e_{i+1}=\{a,b\}$ (see Figure~\ref{fig:red-ii}). When reverting this reduction, we have a guarantee that $\{a,b\}$ does not belong to a perfect matching, hence we can add any copy of $\{v,w\}$ to it.

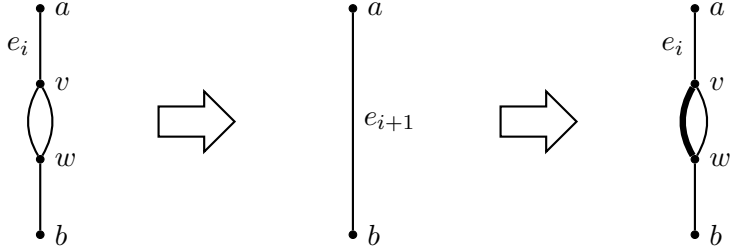
\begin{figure}[htpb]
\centering
\begin{tikzpicture}[transform shape]
	\pic (arrow1) [scale=0.2]{transformsTo};
	\pic [left=1.5cm of arrow1-leftEnd] {red-ii={e0}{e0}{e0}{e0}{$e_i$}};
	\pic [right=1.5cm of arrow1-rightEnd] {red-ii-single={e0}{$e_{i+1}$}};
	\pic (arrow2) [scale=0.2,right=3.5cm of arrow-rightEnd]{transformsTo};
	\pic [right=1.5cm of arrow2-rightEnd] {red-ii={e0}{e1}{e0}{e0}{$e_i$}};
\end{tikzpicture}
\caption{Reduction of type II (and its reverting). The matched edges are marked by thick lines.\label{fig:red-ii}}
\end{figure}

\subsection{$\Oh(n\log^2{n})$ time algorithm with fully dynamic connectivity}

Diks and Stańczyk presented a faster algorithm for the perfect matching problem in bridgeless cubic multigraphs by replacing a fully
dynamic $2$-edge-connectivity structure by a fully dynamic connectivity structure. Such a structure supports checking if two given vertices of the
multigraph are in the same connected component, and allows adding and removing edges. This turns out to be possible by maintaining 
a spanning tree of the current multigraph. When performing a reduction of type I on a single edge $e$, they remove all edges incident to any endpoint
of $e$ from both the fully dynamic connectivity structure and the spanning tree. Checking which pairs of vertices adjacent to endpoints of $e$
are in the same connected component and how they are connected allows to check if we have to perform straight or crossing reduction of type I to obtain
a bridgeless cubic multigraph as well. The spanning tree is maintained in the link-cut tree, so the total running time is dominated by the running time of
the fully dynamic connectivity structure. Originally, the algorithm used either the structure of Holm, de Lichtenberg and Thorup~\cite{HolmEtAl2001}, which
works in $\Oh(\log^2{n})$ amortized time per operation, or the randomized variant presented by Thorup~\cite{Thorup2000} which works in $\Oh(\log{n}(\log{\log{n}})^3)$
expected amortized time per operation. 
However, now the fastest known fully dynamic connectivity structures are by Wulff-Nilsen~\cite{WulffNilsen2012} with $\Oh(\log^2{n}/\log{\log{n}})$ amortized time per
operation, or the structure given by Huang, Huang, Kopelowitz, Pettie, and Thorup~\cite{HuangEtAl2023} with $\Oh(\log{n}(\log{\log{n}})^2)$ amortized
expected time per operation. Hence, the perfect matching problem in bridgeless cubic multigraphs can be solved in $\Oh(n\log^2{n}/\log{\log{n}})$ deterministic
time, or $\Oh(n\log{n}(\log{\log{n}})^2)$ expected time.

\section{Outline of our algorithm}

We give an outline of our algorithm below. It is based on the algorithm given by Biedl, Bose, Demaine, and Lubiw~\cite{BiedlEtAl2001}, but it does not need a
fully dynamic $2$-edge-connectivity structure. Let $G_0$ be the input multigraph, which is bridgeless and cubic. 
We proceed in iterations that construct the sequence $G_0$, $G_1$, \ldots, $G_k$ as follows.

\begin{algorithm}
\caption{}
\begin{algorithmic}
\State $e_{0} \gets$ any edge of $G_0$
\State $T_0\gets$ any spanning tree of $G_0$
\For {$e\in E(G_0)\setminus E(T_0)$}
	\State Set $\cover_0(f)\gets e$ for every $f\in E(T_0)$ on the path in $T_0$ connecting both endpoints of $e$
\EndFor
\For {$i= 0$ to $n/2-2$}
	\If {$e_i$ is incident to a single edge $e$ of $G_i$}
		\State Obtain $G_{i+1}$ by performing a reduction of type I on edge $e$ of $G_i$
	\Else
		\State Obtain $G_{i+1}$ by performing a reduction of type II on a double edge of $G_i$ incident to $e_i$
	\EndIf
	\State Obtain spanning tree $T_{i+1}$ of $G_{i+1}$ from $T_i$
\EndFor
\State $M_k\gets\{e\}$ for some $e\in E(G_k)\setminus\{e_k\}$
\For {$i= n/2-2$ downto $0$}
	\State Obtain a perfect matching $M_i$ of $G_i$ from $M_{i+1}$ by reverting the corresponding reduction
\EndFor
\end{algorithmic}
\end{algorithm}

 Similarly as in the algorithm given by Diks and Stańczyk, for every $G_i$, we construct a spanning tree $T_i$ of $G_i$ as well.
Additionally, for every $e\in E(T_i)$ we maintain any edge from $E(G_i)\setminus E(T_i)$ which covers $e$ in $T_i$, denoted $\cover_i(e)$.
Notice that such an edge always exists, since $G_i$ is bridgeless.
The spanning tree $T_i$ and $\cover_i(e)$, for every $e\in E(T_i)$, are maintained in a link-cut tree as described in Section~\ref{sec:link-cut}.
Moreover, we maintain an
edge $e_i$, which will not belong to the found perfect matching $M_i$ as in the algorithm presented in Subsection~\ref{subsec:biedl}.
We will construct the spanning trees $T_0$, $T_1$, \ldots, $T_k$ during the execution of the algorithm, making sure to maintain the following invariant.

\begin{invariant0}\label{inv:cover}
For every $T_i$ and edge $e\in E(T_i)$, $\cover_i(e)$ is an edge of $E(G_i)\setminus E(T_i)$ that covers $e$ in $T_i$.
\end{invariant0}

\section{Details}
In this section we explain how to implement the reductions and update the maintained information during the execution of the algorithm. Moreover, we prove that
Invariant~\ref{inv:cover} is maintained. In Subsection~\ref{subsec:complexity} we present the time and space complexity analysis.

\subsection{Swap}
We first define our atomic operation \emph{swap} on an edge $e\in E(T_i)$. It consists in removing $e$ from $T_i$, adding $e'=\cover_i(e)$ to $T_i$,
and setting $\cover_i(f)=e$ for every edge $f$ covered by $e$ in the new $T_i$. We will be using swap operation as a black box. The following lemma
proves that performing a swap does not spoil the maintained information.

\begin{lemma}\label{lem:swap}
The swap operation maintains Invariant~\ref{inv:cover} for $T_i$.
\end{lemma}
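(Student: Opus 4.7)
The plan is to verify Invariant~\ref{inv:cover} for each edge of the new spanning tree $T_i'$ produced by the swap. Fix notation: let $e' = \cover_i(e)$ and let $P$ denote the unique path in the old $T_i$ between the endpoints of $e'$. By Invariant~\ref{inv:cover} applied to the old $T_i$, we have $e' \in E(G_i)\setminus E(T_i)$ and $e \in P$, so removing $e$ and adding $e'$ yields another spanning tree $T_i'$ of $G_i$.

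The key combinatorial observation I would establish first is that the unique path in $T_i'$ between the endpoints of $e$ consists of exactly the edges $(P\setminus\{e\})\cup\{e'\}$. This holds because deleting $e$ splits $T_i$ into two components $C_u, C_v$ that are reconnected in $T_i'$ only by $e'$, and the two sub-paths of $P$ on either side of $e$ are precisely the paths in these components joining the endpoints of $e$ to the corresponding endpoints of $e'$. Using this, I would partition $E(T_i')$ into: (i) the edges in $E(T_i)\setminus P$, whose $\cover$ values the swap leaves untouched; (ii) the edges in $P\setminus\{e\}$, whose $\cover$ the swap resets to $e$; and (iii) the new tree edge $e'$, whose $\cover$ is also set to $e$. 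For groups (ii) and (iii) the invariant is then immediate: $e$ is a non-tree edge of $T_i'$ since the swap removed it from the tree, and by the observation above $e$ covers in $T_i'$ exactly the edges in these two groups.

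The main obstacle is group (i). For $f \in E(T_i)\setminus P$ with old cover $g = \cover_i(f)$, I would first argue that $g$ is still a non-tree edge of $T_i'$: we have $g \neq e$ because $e$ was a tree edge of $T_i$, and $g \neq e'$ because any edge covered by $e'$ in $T_i$ lies on $P$, whereas $f \notin P$. Then I would show that $g$ still covers $f$ in $T_i'$: letting $Q$ be the path in $T_i$ between the endpoints of $g$, we have $f \in Q$; if $e \notin Q$ then $Q$ is also a path in $T_i'$ and we are done, otherwise the path in $T_i'$ between the endpoints of $g$ is obtained from $Q$ by substituting the detour $(P\setminus\{e\})\cup\{e'\}$ for $e$, and since $f\neq e$ and $f \notin P$, the edge $f$ survives in the rerouted path.

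The subtle point—and the most likely source of a slip—is verifying that this rerouting really does yield the simple path in $T_i'$ between the endpoints of $g$, and that $f$ genuinely lies on it. I plan to pin this down by noting which of the two components $C_u, C_v$ contains $f$ and checking that in that component $f$ still separates the corresponding endpoint of $g$ from the corresponding endpoint of $e'$; this reduces cleanly to the fact that $f \notin P$, so the path from the endpoint of $g$ to the endpoint of $e'$ inside that component must still cross $f$.
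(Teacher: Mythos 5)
Your proposal is correct and follows essentially the same route as the paper's proof: you show that $e$ covers exactly $(P\setminus\{e\})\cup\{e'\}$ in the new tree, and for every other tree edge $f$ you argue its old cover edge is neither $e$ nor $e'$ and still covers $f$, rerouting its covering path through $e'$ when it used $e$. The only difference is cosmetic: you spell out more carefully (via the two components created by deleting $e$) why the rerouted path is the actual tree path containing $f$, a detail the paper treats informally.
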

\begin{proof}
Assume that we perform a swap on an edge $e$ and let $e'=\cover_i(e)$. Let $T^0$ be the tree $T_i$ before the swap and $T^1$ after the swap. Let $P$ be the path in
$T^0$ which connects both endpoints of $e'$. Since $e'$ covers exactly the edges of $P$ in $T^0$, $e\in E(P)$. Notice that $e$ covers all edges of $P'=(P\setminus e)\cup e'$
in $T^1$, so every $\cover_i(f)$, for $f\in E(P')$, is updated correctly. Consider any $f\in E(T^1)\setminus E(P')$. By construction, $f$ belongs to $E(T^0)$, so $f'=\cover_i(f)$ covers 
$f$ in $T^0$. Therefore, there exists a path $R$ in $T^0$ which connects both endpoints of $f'$ such that $f\in E(R)$. We claim that $f'$ covers $f$ in $T^1$ as well.
If $e\notin E(R)$, then $R$ is a path in $T^1$ so we are done. Hence, assume that $e\in E(R)$. Notice that $f'\neq e'$ since $e'$ covers only the edges of $E(P)$ in $T^0$.
We construct a path $R'$ in $T^1$ from $R$ by replacing its fragment which is contained in $P$ by going through edge $e'$ instead of $e$ (see Figure~\ref{fig:swap}).
Notice that $f\in E(R')$ since $f\notin E(P')$, so we are done.
\end{proof}

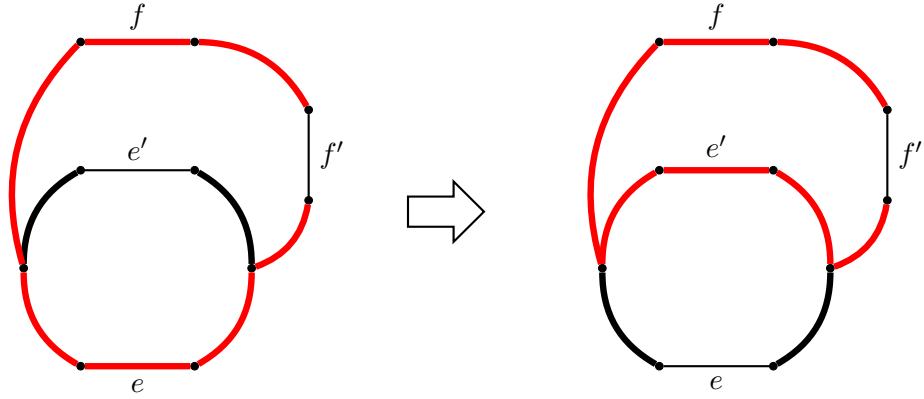
\begin{figure}[htpb]
\centering
\begin{tikzpicture}[transform shape]
	\pic (arrow) [scale=0.2]{transformsTo};
	\pic [left=3.5cm of arrow-leftEnd] {swap={e1}{e0}{red}{black}};
	\pic [right=3cm of arrow-rightEnd] {swap={e0}{e1}{black}{red}};
\end{tikzpicture}
\caption{Proof of Lemma~\ref{lem:swap}. The edges of $T_i$ are marked by thick lines and edges of, correspondingly, $R$ and $R'$ are marked by red lines. \label{fig:swap}}
\end{figure}

\subsection{Reductions of type I}
In this subsection we present how we implement a reduction of type I on a single edge $\{v,w\}$ of $G_i$. We use the notation from the statement of Theorem~\ref{thm:frink}. Let $A_i$ be the set of all edges of $G_i$ incident to edge $\{v,w\}$. Notice that $a$, $b$, $c$ and $d$ are not necessarily distinct, but they are different than $v$ and $w$ since $\{v,w\}$ is single. Moreover, $v\neq w$.

The main idea is the following. Before performing the reduction, we reduce the number of cases to consider by performing several swaps on some edges incident to $v$ or $w$. Recall that, by Lemma~\ref{lem:swap}, these swaps do not spoil the maintained information. Then, we perform either straight or crossing reduction of type I on $\{v,w\}$ depending on how the vertices $a$, $b$, $c$ and $d$ are connected in $(T_i\setminus v)\setminus w$. We obtain $T_{i+1}$ from $T_i$ by deleting the removed edges and adding some of the new edges.
This implicitly sets $\cover_{i+1}(e)=\cover_i(e)$ for every edge $e\in E(T_{i})\cap E(T_{i+1})$. 
Then, we update $\cover_{i+1}(e)$ for every edge $e$ covered with the edges of $E(G_{i+1})\setminus E(T_{i+1})$ in $T_{i+1}$ accordingly.
Finally, we set $e_{i+1}$ to be the new edge which is incident in $G_{i+1}$ to one of the endpoints of $e_i$ in $G_i$.

If $\{v,w\}\in E(T_i)$, we perform a swap on edge $\{v,w\}$. Hence, we can assume that $\{v,w\}\notin E(T_i)$. Moreover, if $A_i\subseteq E(T_i)$, we can perform a swap on at least one of the edges of $A_i$ without adding $\{v,w\}$ to $T_i$. Thus, we assume that either two or three edges of $A_i$ belong to $E(T_i)$. Furthermore, we can assume that, for every $e\in A_i\cap E(T_i)$, $\cover_i(e)\in A_i\cup\{\{v,w\}\}$, as otherwise we can perform a swap on such edge $e$.

If $|A_i\cap E(T_i)|=3$, we can assume that $\{d,w\}\notin E(T_i)$. We consider two subcases depending on how $a$, $b$, $c$ and $d$ are connected in $(T_i\setminus v)\setminus w$ (see Figure~\ref{fig:case3}):
\begin{itemize}
\item both $c$ and $d$ are connected to $a$ (or both to $b$), or 
\item $c$ and $d$ are connected to different vertices $a$ and $b$.
\end{itemize}
Notice that the first subcase cannot happen:
if both $c$ and $d$ are connected to $a$ then, since $\cover_i(\{b,v\})\notin A_i\cup\{\{v,w\}\}$, we could have performed a swap on $\{b,v\}$. Hence, we are left with the second subcase. Assume, without loss of generality, that $c$ is connected to $a$ and $d$ to $b$ in $(T_i\setminus v)\setminus w$. Then we perform a crossing reduction on $\{v,w\}$. Moreover, we add exactly one of the added edges to $T_{i+1}$.

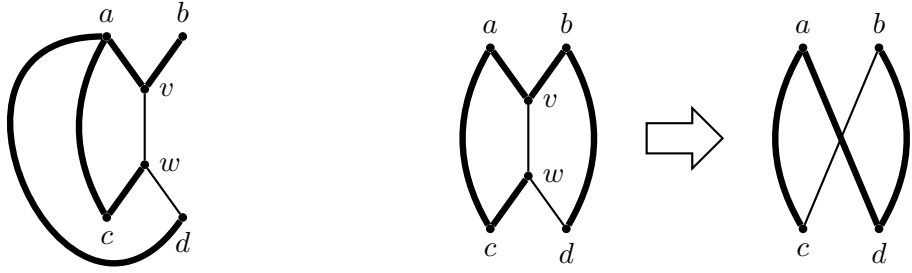
\begin{figure}[htpb]
\centering
\begin{minipage}{.49\textwidth}
	\centering
	\begin{tikzpicture}[transform shape]
		\pic {case3a};
	\end{tikzpicture}
\end{minipage}
\begin{minipage}{.49\textwidth}
	\begin{tikzpicture}[transform shape]
		\pic (arrow) [scale=0.2]{transformsTo};
		\pic [left=1.5cm of arrow-leftEnd] {case3b-1};
		\pic [right=1.5cm of arrow-rightEnd] {case3b-2};
	\end{tikzpicture}
\end{minipage}
\caption{The case when $|A_i\cap E(T_i)|=3$. The edges of $T_i$ and $T_{i+1}$ are marked by thick lines.\label{fig:case3}}
\end{figure}

If $|A_i\cap E(T_i)|=2$, assume that $\{a,v\}$ and $\{c,w\}$ belong to $E(T_i)$. Since $(T_i\setminus v)\setminus w$ is still connected in that case, we have the three following subcases depending on which pairs of the vertices $a$, $b$, $c$ and $d$ are connected in $(T_i\setminus v)\setminus w$ first (see Figure~\ref{fig:case2}).
Formally, we partition $\{a,b,c,d\}$ into two pairs $\{x,y\}$ and $\{x',y'\}$ such that the paths connecting $x$ with $y$
and $x'$ with $y'$ in $(T_i\setminus v)\setminus w$ are edge-disjoint (but not necessarily vertex-disjoint). Such edge-disjoint
paths always exists, for example it is straightforward to verify that paths with the smallest total length are edge-disjoint.
Then, we say that $x$ is connected to $y$ and $x'$ to $y'$.
\begin{itemize}
\item If $a$ is connected to $b$ and $c$ to $d$, we perform arbitrary reduction of type I on $\{v,w\}$.
\item If $a$ is connected to $c$ and $b$ to $d$, we perform crossing reduction on $\{v,w\}$.
\item If $a$ is connected to $d$ and $b$ to $c$, we perform straight reduction on $\{v,w\}$.
\end{itemize}
In all of these subcases we add no new edges to $T_{i+1}$. Notice that these subcases may overlap.

\begin{figure}[htpb]
\centering
\begin{minipage}{.49\textwidth}
	\begin{tikzpicture}[scale=1,transform shape]
		\pic (arrow) [scale=0.2]{transformsTo};
		\pic [left=2.3cm of arrow-leftEnd] {case2a-1};
		\pic [right=1cm of arrow-rightEnd] {case2a-2};
	\end{tikzpicture}
\end{minipage}
\begin{minipage}{.49\textwidth}
	\begin{tikzpicture}[scale=1,transform shape]
		\pic (arrow) [scale=0.2]{transformsTo};
		\pic [left=1.7cm of arrow-leftEnd] {case2b-1};
		\pic [right=1.7cm of arrow-rightEnd] {case2b-2};
	\end{tikzpicture}
	\centering
	\begin{tikzpicture}[scale=1,transform shape]
		\pic (arrow) [scale=0.2]{transformsTo};
		\pic [left=1.7cm of arrow-leftEnd] {case2c-1};
		\pic [right=1.7cm of arrow-rightEnd] {case2c-2};
	\end{tikzpicture}
\end{minipage}
\caption{The case when $|A_i\cap E(T_i)|=2$. The edges of $T_i$ and $T_{i+1}$ are marked by thick lines.\label{fig:case2}}
\end{figure}
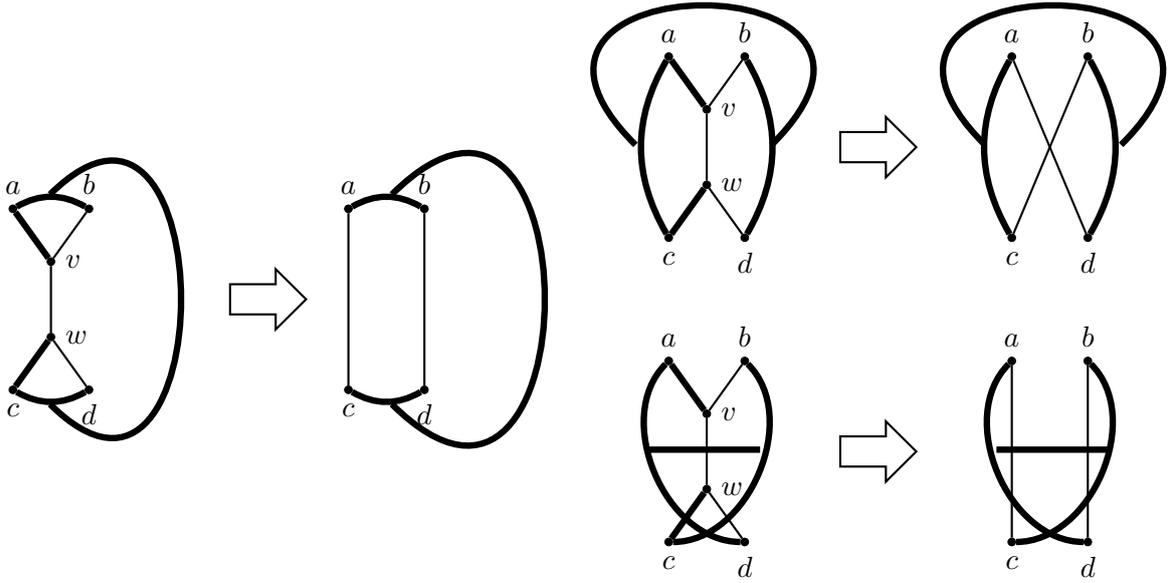

After performing every reduction of type I, we update the maintained information as follows.
For every edge $e\in E(G_{i+1})\setminus E(G_i)$ which does not belong to $T_{i+1}$, we set $\cover_{i+1}(f)=e$ for every edge $f$ on the path in $T_{i+1}$ connecting both endpoints of $e$. 

\begin{lemma}\label{lem:invariant-red-i}
Performing a reduction of type I maintains Invariant~\ref{inv:cover}.
\end{lemma}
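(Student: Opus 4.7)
The plan is to verify, for every edge $f \in E(T_{i+1})$, that the assigned $\cover_{i+1}(f)$ is an edge of $E(G_{i+1}) \setminus E(T_{i+1})$ whose $T_{i+1}$-path contains $f$. Since the preliminary swap operations preserve Invariant~\ref{inv:cover} by Lemma~\ref{lem:swap}, I may assume that $T_i$ satisfies the invariant together with the structural guarantees set up just before the lemma: $\{v,w\} \notin E(T_i)$, $|A_i \cap E(T_i)| \in \{2,3\}$, and $\cover_i(e) \in A_i \cup \{\{v,w\}\}$ for every $e \in A_i \cap E(T_i)$. A useful preliminary observation is that because $(T_i \setminus v) \setminus w \subseteq T_{i+1}$, any path confined to a component of $(T_i \setminus v) \setminus w$ is simultaneously the corresponding $T_i$- and $T_{i+1}$-path by uniqueness.

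I would then split the edges of $E(T_{i+1})$ into three groups and handle them in order. First, for a newly inserted tree edge $f \in E(T_{i+1}) \setminus E(T_i)$, which arises only when $|A_i \cap E(T_i)| = 3$, the second new edge added by the reduction is a non-tree edge of $G_{i+1}$ whose endpoints lie in the two distinct components of $(T_i \setminus v) \setminus w$ joined by $f$, so the update step correctly sets $\cover_{i+1}(f)$ to that second edge. Second, for $f \in E(T_i) \cap E(T_{i+1})$ with $\cover_i(f) \notin A_i \cup \{\{v,w\}\}$, the default assignment $\cover_{i+1}(f) = \cover_i(f)$ is retained, and I would show it still covers $f$ in $T_{i+1}$: the $T_i$-path $P$ between the endpoints of $\cover_i(f)$ either avoids $v, w$ entirely (in which case $P \subseteq (T_i \setminus v) \setminus w \subseteq T_{i+1}$ is unchanged), or it uses $v$ as an internal vertex via $\{a,v\}$ and $\{b,v\}$---which can only happen when $|A_i \cap E(T_i)| = 3$, since otherwise both $v$ and $w$ are leaves of $T_i$---in which case $f \neq \{a,v\}, \{b,v\}$ survives on a preserved subpath after rerouting through the new tree edge $\{a,d\}$. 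Third, for $f \in E(T_i) \cap E(T_{i+1})$ with $\cover_i(f) \in A_i \cup \{\{v,w\}\}$, the default assignment is invalid and I must show that $f$ lies on the $T_{i+1}$-path of some new non-tree edge, so that the update step overwrites $\cover_{i+1}(f)$ with a valid edge.

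The main obstacle is this third group. My plan is to exploit the Steiner structure of $\{a,b,c,d\}$ inside $(T_i \setminus v) \setminus w$ (or inside each of its two components in the case $|A_i \cap E(T_i)| = 3$): the four vertices span a subtree with at most two branching points $m_1, m_2$ joined by a trunk segment. The possible values $\cover_i(f) \in \{\{v,w\}, \{b,v\}, \{d,w\}\}$ (those in $A_i \cup \{\{v,w\}\}$ that are non-tree edges) each pin $f$, after removing the deleted tree edges $\{a,v\}, \{b,v\}, \{c,w\}$ from the relevant $T_i$-path, to one of the paths $P^T_{ac}, P^T_{ab}, P^T_{bd}$ or $P^T_{cd}$ inside $(T_i \setminus v) \setminus w$. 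A short direct check for each of the four subcases of the reduction (the subcase of $|A_i \cap E(T_i)| = 3$ and the three subcases (a), (b), (c) of $|A_i \cap E(T_i)| = 2$) then confirms that every such path is contained in the union of the $T_{i+1}$-paths of the two new non-tree edges produced by the reduction, completing the verification of the invariant.
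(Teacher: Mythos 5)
Your overall plan is close in spirit to the paper's: a case analysis showing that the paths of the new non-tree edges cover the relevant part of the Steiner subtree of $\{a,b,c,d\}$, plus a rerouting argument for edges whose stored cover is untouched. However, your second group contains a step that fails. You partition the surviving tree edges by whether $\cover_i(f)$ was deleted, and for $f$ with $\cover_i(f)\notin A_i\cup\{\{v,w\}\}$ you claim the old cover still covers $f$ in $T_{i+1}$, arguing that when the $T_i$-path $P$ of $\cover_i(f)$ passes through $v$ (only possible when $|A_i\cap E(T_i)|=3$), $f$ ``survives on a preserved subpath after rerouting through $\{a,d\}$.'' This is not true in general. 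Write $P$ as $x\to a$, $\{a,v\}$, $\{v,b\}$, $b\to y$, with $x$ in the component of $a,c$ and $y$ in the component of $b,d$ of $(T_i\setminus v)\setminus w$. The unique path from $x$ to $y$ in $T_{i+1}$ is $x\to a$, $\{a,d\}$, $d\to y$, and its second half $d\to y$ differs from $b\to y$ along the $b$--$d$ path. So if $f$ lies on the segment of the $b$--$d$ path between $b$ and the median of $b,d,y$, then $f\in E(P)$ but $f$ is not on the new path, i.e.\ $\cover_i(f)$ does \emph{not} cover $f$ in $T_{i+1}$; nothing in the preprocessing excludes this, since the swaps only constrain the covers of the edges in $A_i\cap E(T_i)$, not of edges deeper in the tree. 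The invariant still holds for such $f$, but only because $f$ lies on the $b$--$d$ path, which is part of the $T_{i+1}$-path of the new non-tree edge $\{b,c\}$, so the update step overwrites $\cover_{i+1}(f)$ with $\{b,c\}$; your claim that the default value ``is retained'' for all of this group misses exactly these edges.

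The repair is essentially the paper's choice of decomposition: split by whether $f$ lies in the Steiner subtree $S_{i+1}$ of $\{a,b,c,d\}$ in $T_{i+1}$, rather than by whether the old cover was deleted. Every edge of $S_{i+1}$ is overwritten with a valid cover by the update step (this is your group-1/group-3 case analysis; note also that in the case $|A_i\cap E(T_i)|=3$ there is only \emph{one} new non-tree edge, not two, and its $T_{i+1}$-path is $b$--$d$, $\{a,d\}$, $a$--$c$, which does suffice), while for $f\notin E(S_{i+1})$ the old cover survives and the rerouting argument is sound, because the replaced fragment of $P$ is contained in the Steiner subtree and $f$ is not, so $f$ stays on the rerouted path. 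With that adjustment your argument goes through and coincides with the paper's proof.
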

\begin{proof}
Let $S_i$ and $S_{i+1}$ be the subgraphs of, respectively, $T_i$ and $T_{i+1}$ consisting of all edges which lie on some path in, respectively, $T_i$ and $T_{i+1}$ connecting some of the vertices $v$, $w$, $a$, $b$, $c$ and $d$. It is easy to check that every edge of $E(S_{i+1})$ is covered by some edge added to $G_{i+1}$ which does not belong to $T_{i+1}$.
Hence, $\cover_{i+1}(e)$ for every edge $e\in E(S_{i+1})$ is correct.

It is left to prove that $\cover_{i+1}(e)$ is correct for every edge $e\in E(T_{i+1})\setminus E(S_{i+1})$. By construction, $e\in E(T_i)\setminus E(S_i)$. Notice that $\cover_{i+1}(e)=\cover_i(e)$. We claim that $f=\cover_i(e)$ covers $e$ in $T_{i+1}$. First, we notice that $f$ belongs to $G_{i+1}$. This follows from an easy observation that every edge of $(A_i\cup\{\{v,w\}\})\setminus E(T_i)$ covers only some edges of $E(S_i)$ in $T_i$, so it cannot cover $e$ in $T_i$. Consider a path $P$ in $T_i$ which connects both endpoint of $f$. From definition of $f$, $e\in E(P)$. If $P$ does not contain any edges of $E(S_i)$, then $P$ is a path in $T_{i+1}$ as well, so $f$ covers $e$ in $T_{i+1}$. Otherwise, we construct a path $P'$ from $P$ by replacing its fragment which is contained in $S_i$ by a corresponding path in $S_{i+1}$ (see Figure~\ref{fig:invariant-red-i}). Since $e\notin E(S_i)$, $e\in E(P')$. Hence, $f$ covers $e$ in $T_{i+1}$.
\end{proof}

\begin{figure}[htpb]
\centering
\begin{tikzpicture}[scale=0.8,transform shape]
	\pic (arrow) [scale=0.2]{transformsTo};
	\pic [left=3cm of arrow-leftEnd,scale=1.25] {invariant={red={e0}{e1,red}{e1,red}{e1}{e0}}{red}{}};
	\pic [right=3cm of arrow-rightEnd,scale=1.25] {invariant={red-cross={e1,red}{e0}}{}{red}};
\end{tikzpicture}
\caption{The proof of Lemma~\ref{lem:invariant-red-i}. The edges of $T_i$ and $T_{i+1}$ are marked by thick lines. The edges of, correspondingly, $P$ and $P'$ are marked by red lines.\label{fig:invariant-red-i}}
\end{figure}
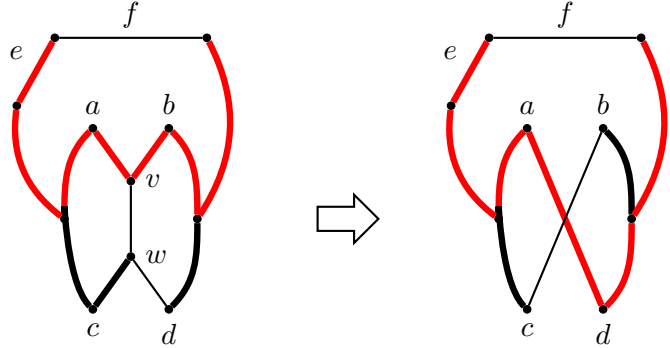

\subsection{Reductions of type II}
Now we consider the reduction of type II on a double edge incident to $e_i$ of $G_i$. Recall that $e_i$ is incident to two different double edges. Therefore, at least one of them, say $\{f_1,f_2\}$, contains some edge of $E(T_i)$ as otherwise $T_i$ would be disconnected. We perform a reduction of type II on double edge $\{f_1,f_2\}$ where $f_1=\{v,w\}=f_2$. Let $a\neq w$ be a neighbor of $v$, so $e_{i}=\{a,v\}$, and $b\neq v$ be a neighbor of $w$. Assume that $f_1\in E(T_i)$. Of course, $f_2\notin E(T_i)$ then.
First, we remove from $G_i$ edges  $e_i$, $f_{1}$, $f_{2}$ and $\{b,w\}$. If any of these edges is in $T_i$ it is not included in $T_{i+1}$.
Then, we create an edge $\{a,b\}$.
We consider the two following subcases (see Figure~\ref{fig:red-ii-analysis}).
\begin{itemize}
\item If both $\{a,v\}$ and $\{b,w\}$ belong to $E(T_i)$, we add $\{a,b\}$ to $T_{i+1}$
and set $\cover_{i+1}(\{a,b\})=\cover_i(\{a,v\})$.
\item If only one of $\{a,v\}$ or $\{b,w\}$ belongs to $E(T_i)$, say $\{a,v\}$, we identify $\{a,b\}$ with $\{b,w\}$. This guarantees that $\cover_{i+1}(e)$ is correct for every edge $e$ on the path in $T_{i+1}$ connecting $a$ and $b$.
\end{itemize}
It is easy to check that Invariant~\ref{inv:cover} is maintained after a reduction of type II.

\begin{figure}[htpb]
\centering
\begin{subfigure}{.49\textwidth}
	\centering
	\begin{tikzpicture}[transform shape]
		\pic (arrow) [scale=0.2]{transformsTo};
		\pic [left=1.5cm of arrow-leftEnd] {red-ii={e1}{e1}{e0}{e1}{$e_i$}};
		\pic [right=1.5cm of arrow-rightEnd] {red-ii-single={e1}{$e_{i+1}$}};
	\end{tikzpicture}
\end{subfigure}
\begin{subfigure}{.49\textwidth}
	\centering
	\begin{tikzpicture}[transform shape]
		\pic (arrow) [scale=0.2]{transformsTo};
		\pic [left=1.5cm of arrow-leftEnd] {red-ii={e1}{e1}{e0}{e0}{$e_i$}};
		\pic [right=1.5cm of arrow-rightEnd] {red-ii-single={e0}{$e_{i+1}$}};;
	\end{tikzpicture}
\end{subfigure}
\caption{Performing a reduction of type II. The edges of $T_i$ and $T_{i+1}$ are marked by thick lines.\label{fig:red-ii-analysis}}
\end{figure}
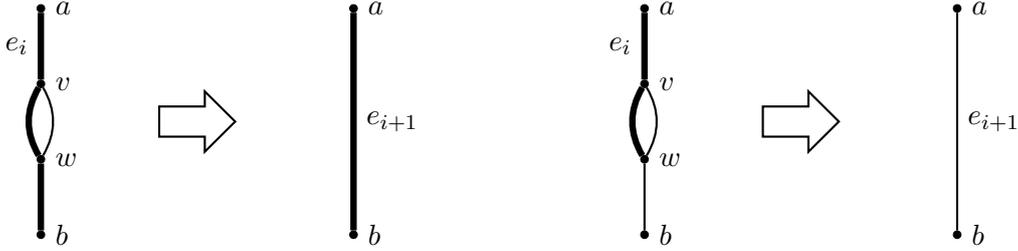

\subsection{Complexity analysis}\label{subsec:complexity}

The algorithm performs less than $n$ iterations of the for loop.
Throughout the execution of the algorithm, we maintain the spanning tree $T_i$ in a link-cut tree.
Additionally, we maintain the edges incident to any vertex of $G_i$ on a doubly-linked list.
Each edge maintains a single bit denoting whether it belongs to $T_i$.
Because the degree of every vertex of $G_i$ is constant, this allows us to find a single edge incident to $e_{i}$,
or choose a double edge $\{f_{1},f_{2}\}$ such that $f_{1}\in E(T_i)$, in constant time.

It is straightforward to verify that a swap operation can be implemented in constant number of operations on the link-cut tree.
To implement a reduction of type I, we first need a constant number of swap operations.
Then, we need to distinguish between $|A_i \cap E(T_i) | = 3$ and $|A_i \cap E(T_i)| = 2$, which is easy
by inspecting the bits maintained by the edges in $A_i$. In the latter case, we need to find
a partition of $\{a,b,c,d\}$ into two pairs $\{x,y\}$ and $\{x',y'\}$ such that the corresponding paths
in $(T_i\setminus v)\setminus w$ are edge-disjoint. To this end, we can check all 3 possibilities, and for each
of them test if the corresponding paths in $T_i$ are edge-disjoint in $\Oh(\log n)$ time.
Finally, after deciding whether we should apply a crossing or a straight reduction, we construct $G_{i+1}$ from $G_i$ by removing vertices $v$ and $w$
and their incident edges (and possibly from $T_i$), and adding the appropriate two edges,
and (in the case $|A_i \cap E(T_i) | = 3$) add one of them to $T_{i+1}$. Then, we update the cover values.
Overall, this takes $\Oh(\log n)$ time.

To implement a reduction of type II, we obtain $G_{i+1}$ from $G_i$ by removing vertices $v$ and $w$ and their incident edges
(and possibly from $T_i$), and adding edge $\{a,b\}$. In the first subcase, we update the cover value of the new edge.
In the second subcase, we need to implicitly update the cover value of every edge $e$ such that $\cover_i(e)=\{b,w\}$
to the new edge. To this end, we think that each edge $e=\{u,v\}$ is an object that stores the endpoints
$u$ and $v$. Then, $\cover_i(e)$ returns a pointer to the corresponding object.
When creating a new edge, we create a new object. However, in the second subcase we reuse the object corresponding
to the edge $\{b,w\}$, and modify its endpoints.

To reverse the reductions, we maintain the current matching $M_i$. Each edge stores a single bit denoting whether it belongs
to $M_i$. Then, reverting a reduction of type I takes only constant time by inspecting one of the new edges and checking if
it belongs to $M_i$. Depending on the case, we appropriately update $M_i$. 
Reverting a reduction of type II is even simpler, as we always add one copy of the double edge to $M_i$,
and possibly need to restore the object corresponding to the edge $\{b,w\}$.
For both types, we remove the new edges and add back the removed vertices and edges.

The overall time complexity is $\Oh(n\log n)$, and the algorithm uses $\mathcal{O}(n)$ space.

\section{Conclusions}
We presented a simple algorithm for the perfect matching problem in bridgeless cubic multigraphs, which works in $\Oh(n\log{n})$ deterministic time.
As opposed to the previous algorithms, it does not use any complex fully dynamic ($2$-edge-)connectivity structure. The natural open
question is to further improve the time complexity.

Another open problem is to apply a similar approach to the \emph{unique perfect matching problem} in sparse graphs.
It consists in checking if a given graph admits exactly one perfect matching, and finding it if so. The fastest known
deterministic algorithm for this problem was given by Gabow, Kaplan and Tarjan~\cite{GabowEtAl2001}, and takes $\Oh(n(\log{n})^2(\log{\log{n}})^2)$ time
when using the fastest fully dynamic $2$-edge-connectivity structure given by Holm, Rotenberg and Thorup~\cite{HolmEtAl2018}.
Note that the unique perfect matching problem can be solved in optimal linear time in dense graphs by using the decremental dynamic $2$-edge-connectivity structure given by Aamand
et al.~\cite{AamandEtAl2023}.

\bibliography{bib}

\end{document}